\newtheorem{theorem}{Theorem}
\newtheorem{proposition}[theorem]{Proposition}
\newtheorem{assumption}{Assumption}
\newcommand{\tr}{\text{tr }}       
\newcommand{\E}{\text{E}}       
\newcommand{\vectorization}{\text{vec}}       
\newcommand{\scr}{\mathcal}
\title{{\large FEDERAL UNIVERSITY OF MINAS GERAIS \\
Department of Electronic Engineering}
\\[2.5cm]
\Large Technical Report
\\[0.5cm]
\large Bayesian Subspace Identification in the MIMO Case
}
\author{\\[3cm] Alexandre Rodrigues Mesquita}
\begin{document}

\maketitle

\abstract{This report investigates the extension of the Bayesian Subspace System Identification method proposed in \cite{mesquita2024robust} to the Multiple-Input Multiple-Output (MIMO) case. We derive new equivariant priors and posterior distributions specifically suited for the MIMO framework. Numerical results utilizing the DAISY dataset are reported to validate the approach.}



\section{Introduction}
\label{sec:introduction}

Subspace system identification relies on estimating subspaces defined by a large number of parameters. This estimation process may benefit significantly from variance reduction techniques like regularization and Bayesian inference. In the last decade, regularization techniques such as nuclear norm minimization \cite{verhaegen2016n2sid,smith2014frequency,pillonetto2016regularized,chiuso2019system,sun2022finite} have been explored in the context of subspace identification methods. Shrinkage estimators were first applied to subspace identification in \cite{yin2021low}. While Bayesian methods were recently introduced in \cite{mesquita2024robust}, they were restricted to single-input single-output (SISO) models. 
The goal of this contribution is to extend Bayesian subspace identification to the multiple-input multiple-output (MIMO) case. 

Two key steps in subspace identification are i) some variant of linear regression and ii) the approximation of some data matrix by a low-rank matrix obtained by truncating the singular values of the original matrix. Bayesian subspace identification replaces these two steps by alternating Bayesian regressions in the row space and column space of the observed data matrix. In other words, the empirical posterior distribution is computed by a series of Gibbs sampling steps consisting of alternating regularized least squares problems.
The prior distribution is designed to obey some structural properties of the estimated matrices, such as the lower block triangular structure of the noise covariance matrix. In particular, equivariant prior distributions were derived for such matrices.

In this contribution, we derive i) an extended equivariant prior distribution that applies to the MIMO case and ii) derive the corresponding posterior distribution suited for the MIMO case. Since the desired extended result was not trivially available from the SISO case, we made use of properties of the block vectorization operation in order to find a computable posterior distribution.

\section{The subspace identification framework}
\label{sec:problem}

We assume the system whose model we want to identify obeys the linear state space model in innovation form
\begin{align}
x_{k+1} &= A x_k +B u_k +K e_k \\
y_k &= Cx_k + D u_k + e_k
\enspace,
\end{align}
where $x_k\in \mathbb{R}^{n_x}$ is the state, $y_k\in\mathbb{R}^{n_o}$ is the measured variable, $u_k\in\mathbb{R}^{n_i}$ is the input variable and $e_k$ is the innovations process, assumed to be white Gaussian noise. The model parameters $A, B, C, D$ and $K$ are matrices with the appropriate dimensions.
This same model may be expressed equivalently in in the predictor form
\begin{align}
x_{k+1} &= A_K x_k +B_K z_k \\
y_k &= Cx_k + D u_k + e_k
\enspace,
\end{align}
where $z_k=[u_k^T ~~y_k^T]^T$, $A_K=A-KC$, $B_K=[B-KD~~K]$.

Following the notation in \cite{qin2006overview}, we make use of the extended state space model
\begin{equation}
Y_f = \Gamma_f X_k + H_f U_f + G_f E_f
\label{eq:mod0}
\end{equation}
and its predictor form
\begin{equation}
Y_f = H_{fp} Z_p + H_f U_f + G_f E_f
\enspace,
\label{eq:main}
\end{equation}
where the available data is arranged in block Hankel matrices defined as
\begin{equation}
Y_f = \left[
\begin{array}{cccc}
y_k & y_{k+1} & \cdots & y_{k+N-1}\\
y_{k+1} & y_{k+2} & \cdots & y_{k+N}\\
\vdots & \vdots & \ddots & \vdots\\
y_{k+f-1} & y_{k+f} & \cdots & y_{k+f+N-2}\\
\end{array}
\right] \label{eq:hankely}
\end{equation}
and similarly are defined $U_f$ and $E_f$. As usual in the subspace identifiation literature, $f$ and $p$ are used both as labels for future and past variables as well as numbers ($f>n_x$, $p>n_x$) that denote the future and past horizons, respectively. Here $N$ is an integer that depends on the size of the available data set. The state sequence is defined as
\begin{equation}
X_k = \left[
\begin{array}{cccc}
x_k & x_{k+1} & \cdots & x_{k+N-1}
\end{array}
\right]
\enspace.
\end{equation}
The past information is collected in $Z_p = [U_p^T ~ Y_p^T]^T$ up to the horizon $p>n_x$ and arranged as
\begin{equation}
U_p = \left[
\begin{array}{cccc}
u_{k-p} & u_{k-p+1} & \cdots & u_{k-p+N-1}\\
u_{k-p+1} & u_{k-p+2} & \cdots & u_{k-p+N}\\
\vdots & \vdots & \ddots & \vdots\\
u_{k-1} & u_{k} & \cdots & u_{k+N-2}\\
\end{array}
\right]
\label{eq:hankel}
\end{equation}
and similarly for $Y_p$. 
As a consequence, $\Gamma_f$ is the extended observability matrix as defined by
\begin{equation}
\Gamma_f =
\left[
\begin{array}{c}
C \\
CA \\
\vdots\\
CA^{f-1}
\end{array}
\right]
\enspace,
\end{equation}
$H_f$ and $G_f$ are block Toeplitz matrices given by
\begin{equation}
\label{eq:hf}
H_f =
\left[
\begin{array}{cccc}
D & 0 & \cdots & 0 \\
CB & D & \cdots & 0\\
\vdots & \vdots & \ddots & \vdots\\
CA^{f-2}B & CA^{f-3}B & \cdots & D
\end{array}
\right]
\enspace,
\end{equation}
and
\begin{equation}
G_f =
\left[
\begin{array}{cccc}
I_{n_o} & 0 & \cdots & 0 \\
CK & I_{n_o} & \cdots & 0\\
\vdots & \vdots & \ddots & \vdots\\
CA^{f-2}K & CA^{f-3}K & \cdots & I_{n_o}
\end{array}
\right]
\cdot (I_f\otimes \Sigma^{1/2})
\enspace,
\label{eq:lowtoep}
\end{equation}
where $I_f$ is the $f\times f$ identity matrix and $\Sigma$ is the innovations covariance.
The matrix $H_{fp}$ is reminiscent of the matrix of Markov parameters and is given by 
\begin{equation}
H_{fp} = \left[
\begin{array}{cc}
H_{fp}^{(1)} & H_{fp}^{(2)}
\end{array}
\right]
\end{equation}
with 
\begin{equation}
H_{fp}^{(i)} = 
\left[
\begin{array}{cccc}
CA_K^{p-1}B_K^{(i)} & CA_K^{p-2}B_K^{(i)} & \cdots & CB_K^{(i)} \\
CAA_K^{p-1}B_K^{(i)} & CAA_K^{p-2}B_K^{(i)} & \cdots & CAB_K^{(i)} \\
\vdots & \vdots & \ddots & \vdots \\
CA^{f-1}A_K^{p-1}B_K^{(i)} & CA^{f-1}A_K^{p-2}B_K^{(i)} & \cdots & CA^{f-1}B_K^{(i)} \\
\end{array}
\right]
\enspace,
\end{equation}
for $i=1,2$ and $B_K^{(1)}=B-KD$ and $B_K^{(2)}=K$. These matrices can also be decomposed as products $H_{fp}^{(i)} = \Gamma_fL_p^{(i)}$ of the extended observability $\Gamma_f$ and controllability matrices:
\begin{equation}
L_p^{(i)}=[
\begin{array}{ccc}
A_K^{p-1}B_K^{(i)} & A_K^{p-2}B_K^{(i)} \cdots B_K^{(i)}
\end{array}
]
\enspace
\end{equation}
for $i=1,2$.

Building upon this formulation, numerous estimation methods have been proposed in the literature that exploit the low-dimensional subspace structure of the data as organized in
(\ref{eq:main}) (see \cite{qin2006overview,van2012subspace} for a comprehensive review). A common approach is to solve (\ref{eq:main}) via least squares, followed by estimating $\Gamma_f$ from a truncated singular value decomposition of $H_{fp}$. Once $\Gamma_f$ is determined, the system matrices $A$ and $C$ can be recovered from the relatioship between $\Gamma_f$ and $\Gamma_{f+1}$. Subsequently, the matrices $B$ and $D$ are obtained either by solving a secondary least squares problem or by utilizing estimates of the controllability matrix  $L_p$.

Generally, most subspace methods proceed by obtaining an initial estimate $\hat{H}_{fp}$ through least squares. Given two weight matrices $W_1$ and $W_2$, a low-dimensional estimate $\hat{\hat{H}}_{fp}$ is then derived by truncating the singular value decomposition of ${W_1\hat{H}}_{fp}W_2$ to its $r$ largest components:
\begin{equation}
W_1\hat{H}_{fp}W_2 = USV^T \approx U_r S_r V_r^T =: W_1\hat{\hat{H}}_{fp}W_2
\enspace.
\end{equation}
In this context, the truncation rank $r$ serves as  an estimate of the system order $n_x$.

Traditional subspace identification methods obtain an initial estimate of $H_{fp}$ from (\ref{eq:main}) using the following least squares solution:
\begin{equation}
[\hat{H}_{fp} ~\hat{H}_f] = Y_f\left[
\begin{array}{c}
U_p\\
Y_p\\
U_f
\end{array}
\right]^\dagger
\enspace.
\label{eq:LS}
\end{equation}
However, it is important to note that this estimate merely approximates the maximum likelihood estimate. This is because the noise term $G_fE_f$ in (\ref{eq:main}) is, at best, only approximately white and the structure of $H_f$ as a block lower triangular Toeplitz matrix is not taken into account.

\section{Problem Description}

Since there are many competing methods in the literature to estimate the matrices $(A,B,C,D,K)$ from $H_{fp}$ and since the estimation errors from such methods are upper bounded by a constant times $\|H_{fp}-\hat{H}_{fp}\|$ as demonstrated in \cite{tsiamis2019finite}, we concentrate our efforts on the estimators of $H_{fp}$. With this in mind, and following the statistical theory of point estimation \cite{robert2007bayesian}, we seek estimators that perform well with regards to risk functions of the form
\begin{equation}
\mathsf{R}_\theta(\hat{\hat{H}}_{fp}) = \text{E}\left[\|\left. W_1(H_{fp}-\hat{\hat{H}}_{fp})W_2\|_F^2) \right| (A,B,C,D,K)=\theta \right]
\enspace.
\label{eq:risk}
\end{equation}
The risk is a function of the estimator being used and also a function of the system parameters $\theta$. The expectation is over all input-output realizations for the system model with parameter $\theta$. In general, it is not possible to minimize the risk for all $\theta$. Therefore one often seeks to minimize the expectation of the risk over some distribution of $\theta$ or the risk for the worst case of $\theta$.

Within this context, our goal is to investigate robustified methods to estimate $H_{fp}$. In particular, we want to design Bayesian estimators that can be efficiently computed with a Gibbs sampling scheme.

\section{An Alternating Least Squares Bayesian Approach}

Bayesian estimators are inherently robust. Under mild conditions \cite{robert2007bayesian}, they are admissible, i.e., there exists no other estimator that improves upon their risk for all the parameters in the parameter space. Under structural conditions involving group invariance of priors and loss functions (see \cite{robert2007bayesian} for definitions), they are also minimax, i.e., they minimize the risk under the worst case parameter value.

In our approach to subspace identification, we aim to construct a Bayesian method that is computationally simple. With this in mind, we choose priors that lead to simple regularized least squares steps. The priors are obtained empirically from the data. This Bayesian model was inspired by that in \cite{ding2011bayesian} but, differently from this paper, we do not explicitly compute singular value decompositions with the orthogonal matrices $U$ and $V$ and the singular value matrix $S$. We found that better results are obtained by directly estimating $US^{1/2}$ and $S^{1/2}V^T$ instead.

\begin{assumption}
Suppose that $f=i$ and $N=j$ so the Hankel matrix $Y_f$ is of size $in_o\times j$. We adopt the following set of conditionally independent priors:
\begin{align}
\Gamma_f &= \bar{G}_f \Xi_\Gamma \Lambda_\Gamma^{-1/2} \label{eq:gammaprior}\\
H_f &= \bar{G}_f \Xi_H \Lambda_H^{-1/2} \\
L_p &= \Lambda_L^{-1/2} \Xi_L Z_p^\dagger \label{eq:lpprior}\\
G_f &\sim \frac{1}{|\mathsf{G}_{11}|^{in_o}}\label{eq:Gprior}
\end{align}
where the $\Xi_{(\cdot)}$ matrices are random matrices with the appropriate dimensions and whose elements are independent, normally distributed random variables with variance $1$, the $\Lambda_{(\cdot)}$ matrices are fixed parameters to be specified, $\bar{G}_f:=G_f/|\mathsf{G}_{11}|^i$, and $G_f$ is restricted to the set of block lower triangular Toeplitz matrices.
\end{assumption}

To define the parameters in the priors, we initially obtain estimates ${H}_{fp}^{(1)}$ and ${H}_f^{(1)}$ from (\ref{eq:LS}) and next compute a truncated singular value decomposition $H_{fp}^{(1)}Z_p \approx U_rS_rV_r^T$. Then, we make $\bar{G}_f^{(1)}=I_i$ and
\begin{align}
\Gamma_f^{(1)} &= U_rS_r^{1/2} \label{eq:gamma1} \\  
L_p^{(1)} &=S_r^{1/2}V_r^TZ_p^\dagger \label{eq:Lp1} \\ 
\Lambda_\Gamma^{-1} &=S_r/(in_o) \label{eq:Lambda}\\
\Lambda_L^{-1}  &=S_r/j \\
\Lambda_H^{-1} &= I_{in_i}\tr( (H_f^{(1)})^TH_f^{(1)})/(i^2n_on_i) \label{eq:lambdaH}
\enspace.
\end{align}
To justify such a choice, we replace the above values of $\bar{G}_f$, $\Lambda_\Gamma^{-1}$ and $\Lambda_L^{-1}$ in (\ref{eq:gammaprior}) and (\ref{eq:lpprior}) and note that they describe the approximate SVD decomposition:
\begin{equation}
\Gamma_f X_p = \frac{\Xi_\Gamma}{\sqrt{in_o}} S_r\frac{\Xi_L}{\sqrt{j}}
\enspace,
\end{equation}
where ${\Xi_\Gamma}/\sqrt{in_o}$ and ${\Xi_L}/\sqrt{j}$ behave as orthogonal matrices in expectation, namely, $\E[\Xi_\Gamma^T\Xi_\Gamma/{in_o}]= \E[\Xi_L\Xi_L^T/{j}] = I_r$. With this, we have a prior that both reflects the desired SVD decomposition and that allows for simple posterior computations. 

The prior distribution for the noise term $G_f$ exploits its block lower triangular Toeplitz structure as exemplified in (\ref{eq:lowtoep}). The set of block lower triangular non-singular Toeplitz matrices is a group under matrix multiplication. This group operation may be interpreted as the cascading of dynamical systems. In this sense, when we impose invariance with respect to this group, we are interested in estimators that are consistent with respect to dynamical system cascading. For example, if we pass both input and output through a linear filter, we want an estimator that gives the same model regardless of the filtering. In the Bayesian framework, such an equivariant estimator demands a prior distribution that is invariant under the group action \cite{robert2007bayesian}. This property is the subject of the following theorem. 

\begin{theorem}
\label{thm:equivariance}
The improper prior probability distribution given in (\ref{eq:Gprior}) for the block lower triangular Toeplitz matrix $G_f$ is invariant under the group actions of left and right multiplication by non-singular matrices of the same structure, i.e., $\mathsf{A}G_f$ and $G_f\mathsf{B}$ have this same distribution for any fixed block lower triangular Toeplitz non-singular matrices $\mathsf{A}$ and $\mathsf{B}$. 
\end{theorem}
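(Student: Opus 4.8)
The plan is to regard the improper prior as the measure $\mu(dG_f)=|\mathsf{G}_{11}|^{-in_o}\,dG_f$ on the space of nonsingular block lower triangular Toeplitz matrices, where $dG_f$ denotes Lebesgue measure on the free parameters — the $f=i$ generating blocks $G^{(0)},G^{(1)},\dots,G^{(i-1)}$, each of size $n_o\times n_o$, with $G^{(0)}=\mathsf{G}_{11}$ the common diagonal block — restricted to the open set where $G_f$ is invertible. The statement to be proved is that $\mu$ is invariant under the translations $G_f\mapsto \mathsf{A}G_f$ and $G_f\mapsto G_f\mathsf{B}$; equivalently, that $\mu$ is a constant multiple of the bi-invariant Haar measure of this matrix group. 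I would prove this by a direct change of variables, which is available because both translations are linear in the generating blocks and hence have constant Jacobian.

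First I would record the elementary group fact that a product of two block lower triangular Toeplitz matrices is again of this form, with generating sequence the convolution of the two factors' generating sequences; in particular, the top-left block of a product is the product of the top-left blocks, so $\mathsf{A}G_f$ has $(1,1)$ block $\mathsf{A}_{11}\mathsf{G}_{11}$ and $G_f\mathsf{B}$ has $(1,1)$ block $\mathsf{G}_{11}\mathsf{B}_{11}$. Thus $|(\mathsf{A}G_f)_{11}|=|\mathsf{A}_{11}|\,|\mathsf{G}_{11}|$ and likewise on the right.

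The core step is computing the Jacobian of $\Phi_{\mathsf{A}}:G_f\mapsto \mathsf{A}G_f$ as a linear map on the generating blocks. Writing out the convolution, the $k$-th generating block of $\mathsf{A}G_f$ depends only on $G^{(0)},\dots,G^{(k)}$, so in the coordinates ordered as $(G^{(0)},\dots,G^{(i-1)})$ the map $\Phi_{\mathsf{A}}$ is block lower triangular with $k$-th diagonal block equal to the linear map $M\mapsto\mathsf{A}_{11}M$ on $n_o\times n_o$ matrices. After vectorization each such diagonal block is $I_{n_o}\otimes\mathsf{A}_{11}$, with determinant $|\mathsf{A}_{11}|^{n_o}$, and there are $i$ of them, so $|\det D\Phi_{\mathsf{A}}|=|\mathsf{A}_{11}|^{in_o}$. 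The change-of-variables formula then gives the pushforward density at $\tilde G_f=\mathsf{A}G_f$ as $|\mathsf{G}_{11}|^{-in_o}\,|\det D\Phi_{\mathsf{A}}|^{-1}=(|\tilde{\mathsf{G}}_{11}|/|\mathsf{A}_{11}|)^{-in_o}|\mathsf{A}_{11}|^{-in_o}=|\tilde{\mathsf{G}}_{11}|^{-in_o}$, which is exactly the original density, so left translation preserves $\mu$. The right-translation case is identical: the diagonal blocks of the Jacobian of $G_f\mapsto G_f\mathsf{B}$ become $M\mapsto M\mathsf{B}_{11}$, each still of determinant $|\mathsf{B}_{11}|^{n_o}$, and the cancellation against $|(G_f\mathsf{B})_{11}|^{-in_o}=|\mathsf{G}_{11}|^{-in_o}|\mathsf{B}_{11}|^{-in_o}$ is the same.

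The main obstacle I anticipate is precisely in that core step: one must correctly identify the block-triangular structure that multiplication induces on the finite-dimensional parameter space of generating blocks, so that only the diagonal blocks enter the determinant, and then pin down the exponent exactly — a factor $n_o$ from the Kronecker factor on each $n_o\times n_o$ block times the $i$ generating blocks, which is exactly the exponent $in_o$ appearing in (\ref{eq:Gprior}). The remaining ingredients — linearity (hence constancy of the Jacobian), multiplicativity of top-left blocks, and the change-of-variables bookkeeping — are routine, and the impropriety of the prior plays no role, since the equality of measures is verified directly on Borel sets.
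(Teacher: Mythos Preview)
Your argument is correct: parametrizing $G_f$ by its $i$ generating $n_o\times n_o$ blocks, observing that left (resp.\ right) multiplication by a fixed element of the group acts on these blocks by a convolution that is block lower triangular with constant diagonal block $M\mapsto \mathsf{A}_{11}M$ (resp.\ $M\mapsto M\mathsf{B}_{11}$), and reading off the Jacobian $|\det \mathsf{A}_{11}|^{in_o}$ (resp.\ $|\det \mathsf{B}_{11}|^{in_o}$) from the $i$ identical diagonal blocks is exactly the right computation. The cancellation against the multiplicativity of the $(1,1)$ block then shows the density $|\mathsf{G}_{11}|^{-in_o}\,dG_f$ is bi-invariant, i.e.\ a Haar measure on this group.

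The paper itself states Theorem~\ref{thm:equivariance} without proof, so there is nothing to compare your route against; your direct change-of-variables verification is the natural argument and would serve as a complete proof. One cosmetic point you may wish to make explicit for the reader: $|\mathsf{G}_{11}|$ in (\ref{eq:Gprior}) must be interpreted as $|\det\mathsf{G}_{11}|$ for the density to be nonnegative and for the cancellation $|\det(\mathsf{A}_{11}\mathsf{G}_{11})|=|\det\mathsf{A}_{11}|\,|\det\mathsf{G}_{11}|$ to go through without sign issues.
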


Because the full posterior distribution for the estimation problem at hand is too hard to characterize analytically, we estimate its empirical distribution using a Gibbs sampler. In a Gibbs sampler, samples from dependent variables $x$ and $y$ are drawn iteratively from their conditional distributions as in: \begin{align}
&y^{(n)}\sim p(y|x^{(n-1)}) \\
&x^{(n)}\sim p(x|y^{(n)})
\enspace.
\end{align}
The probability distribution of the resulting Markov chain $(x^{(n)},y^{(n)})$ is shown to converge under mild conditions to the joint distribution $p(x,y)$. 

Applying our given priors to (\ref{eq:mod0}) with $X_p$ and $G_f$ known, we have a Gaussian linear model whose posterior can be easily computed. On the other hand, if we regard $\Gamma_f, H_f$ and $G_f$ as known, we have a second Gaussian linear model whose posterior is well-known. Such posteriors are summarized in the next proposition.

\begin{proposition}
\label{prop:posterior}
Assume \emph{a priori} distributions given by (\ref{eq:gammaprior})-(\ref{eq:lpprior}). Then, conditioned on $X_p^{(n-1)}=L_p^{(n-1)}Z_p$ and $G_f^{(n-1)}$, samples from the posterior distribution of $\Gamma_f$ and $H_f$ are obtained from:
\begin{equation}
\label{eq:regressionGammaH}
\left[
\begin{array}{cc}
\Gamma_f^{(n)} & H_f^{(n)}
\end{array}\right]
 = Y_f
 \left[
 \begin{array}{c}
 X_p^{(n-1)} \\
 U_f
\end{array}  
\right]^T
\left(
\Sigma_1^{(n)}
\right)^{-1}
\gamma^{(n)}
\\+
\bar{G}_f^{(n-1)}\Xi_{\Gamma,H}^{(n)}\left(
\Sigma_1^{(n)}\right)^{-1/2}
\end{equation}
\begin{equation}
\label{eq:Sigma1}
\Sigma_1^{(n)} = 
 \left[
 \begin{array}{c c}
 \Lambda_\Gamma & 0\\
 0 & \Lambda_H \\
\end{array}  
\right]+
 \left[
 \begin{array}{c}
 X_p^{(n-1)} \\
 U_f
\end{array}  
\right]
 \left[
 \begin{array}{c}
 X_p^{(n-1)} \\
 U_f
\end{array}  
\right]^T
\gamma^{(n)}
\enspace, 
\end{equation}
where $\gamma^{(n)}=1/|\mathsf{G}_{11}^{(n-1)}|^{2i}$ and $\Xi_{\Gamma,H}^{(n)}$ is a random matrix whose components are independently drawn from a unit normal distribution. 

Moreover, conditioned on $\Gamma_f^{(n)}$ and $H_f^{(n)}$, we obtain samples from the posterior distribution of $L_p$ with the following equations:

\begin{equation}
L_p^{(n)}
 = 
\left(\Sigma_2^{(n)}\right)^{-1}
\Gamma_f^{(n)T}\Psi_e^{(n)}
(Y_f-H_f^{(n)}U_f)Z_p^\dagger 
+
\left(\Sigma_2^{(n)}\right)^{-1/2}\Xi_L^{(n)}Z_p^\dagger
\label{eq:reg2}
\end{equation}
\begin{equation}
\label{eq:sigma2}
\Sigma_2^{(n)}
 = 
\Gamma_f^{(n)T}\Psi_e^{(n)}\Gamma_f^{(n)}
+\Lambda_L
\enspace,
\end{equation}
where $\Psi_e^{(n)} = \left(G_f^{(n-1)}(G_f^{(n-1)})^T\right)^{-1}$ and $\Xi_{L}^{(n)}$ is a random matrix whose components are independently drawn from a unit normal distribution.
\end{proposition}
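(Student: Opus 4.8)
The plan is to recognize each of the two Gibbs updates in Proposition~\ref{prop:posterior} as an instance of the conjugate Gaussian--Gaussian linear model and to carry out the matrix bookkeeping with the identity $\mathrm{vec}(AXB)=(B^{T}\otimes A)\,\mathrm{vec}(X)$. First I would record the template: for $\mathrm{vec}(Y)=\mathcal{D}\,\mathrm{vec}(\Theta)+\epsilon$ with prior $\mathrm{vec}(\Theta)\sim\mathcal{N}(0,P_{0}^{-1})$ and noise $\epsilon\sim\mathcal{N}(0,R)$, the posterior of $\mathrm{vec}(\Theta)$ is Gaussian with precision $P=P_{0}+\mathcal{D}^{T}R^{-1}\mathcal{D}$ and mean $P^{-1}\mathcal{D}^{T}R^{-1}\mathrm{vec}(Y)$, and a posterior draw can be written as this mean plus $S\xi$, where $\xi$ is a standard normal vector and $S$ is \emph{any} matrix with $SS^{T}=P^{-1}$ (not necessarily symmetric). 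The whole proof then amounts to instantiating this template twice — once with $(\Gamma_{f},H_{f})$ as $\Theta$ and once with $L_{p}$ as $\Theta$ — and un-vectorizing, checking that the expressions collapse to (\ref{eq:regressionGammaH})--(\ref{eq:Sigma1}) and (\ref{eq:reg2})--(\ref{eq:sigma2}).

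For the $(\Gamma_{f},H_{f})$ update I condition on $G_{f}^{(n-1)}$ — hence on $\bar{G}_{f}^{(n-1)}$ — and on $X_{p}^{(n-1)}=L_{p}^{(n-1)}Z_{p}$, and use (\ref{eq:mod0}) with the state sequence replaced by $X_{p}^{(n-1)}$. The prior (\ref{eq:gammaprior}) and its $H_{f}$ analogue give $[\Gamma_{f}~H_{f}]=\bar{G}_{f}\,[\,\Xi_{\Gamma}\Lambda_{\Gamma}^{-1/2}~~\Xi_{H}\Lambda_{H}^{-1/2}\,]$ with the stacked $\Xi$'s having i.i.d.\ unit-normal entries, so $\mathrm{vec}[\Gamma_{f}~H_{f}]$ has prior precision $\Lambda\otimes(\bar{G}_{f}\bar{G}_{f}^{T})^{-1}$ where $\Lambda$ is the block-diagonal matrix appearing as the first term of (\ref{eq:Sigma1}). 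With the regressor $\Phi:=[X_{p}^{T}~U_{f}^{T}]^{T}$ and $R=I_{N}\otimes G_{f}G_{f}^{T}$, the key point is that $\bar{G}_{f}=G_{f}/|\mathsf{G}_{11}|^{i}$ yields $G_{f}G_{f}^{T}=(\gamma^{(n)})^{-1}\,\bar{G}_{f}\bar{G}_{f}^{T}$, so the data term $\mathcal{D}^{T}R^{-1}\mathcal{D}=\gamma^{(n)}\,(\Phi\Phi^{T})\otimes(\bar{G}_{f}\bar{G}_{f}^{T})^{-1}$ carries the same right Kronecker factor as the prior precision. Their sum is $\Sigma_{1}^{(n)}\otimes(\bar{G}_{f}\bar{G}_{f}^{T})^{-1}$ with $\Sigma_{1}^{(n)}$ exactly (\ref{eq:Sigma1}); the posterior mean un-vectorizes to $\gamma^{(n)}\,Y_{f}\Phi^{T}(\Sigma_{1}^{(n)})^{-1}$, and the valid square root $S=(\Sigma_{1}^{(n)})^{-1/2}\otimes\bar{G}_{f}^{(n-1)}$ turns the stochastic term into $\bar{G}_{f}^{(n-1)}\Xi_{\Gamma,H}^{(n)}(\Sigma_{1}^{(n)})^{-1/2}$, which is (\ref{eq:regressionGammaH}).

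For the $L_{p}$ update I condition on $\Gamma_{f}^{(n)},H_{f}^{(n)}$ and rewrite (\ref{eq:mod0}) as $Y_{f}-H_{f}U_{f}=\Gamma_{f}L_{p}Z_{p}+G_{f}E_{f}$, which is linear in $L_{p}$ with design $Z_{p}^{T}\otimes\Gamma_{f}^{(n)}$ and noise precision $I_{N}\otimes\Psi_{e}^{(n)}$. Assuming $Z_{p}$ has full row rank, so that $Z_{p}^{\dagger}=Z_{p}^{T}(Z_{p}Z_{p}^{T})^{-1}$, the prior (\ref{eq:lpprior}) gives $\mathrm{vec}(L_{p})$ the prior precision $(Z_{p}Z_{p}^{T})\otimes\Lambda_{L}$, while the data term contributes $(Z_{p}Z_{p}^{T})\otimes(\Gamma_{f}^{(n)T}\Psi_{e}^{(n)}\Gamma_{f}^{(n)})$; the Hankel Gram matrix $Z_{p}Z_{p}^{T}$ factors out of both, so the posterior precision is $(Z_{p}Z_{p}^{T})\otimes\Sigma_{2}^{(n)}$ with $\Sigma_{2}^{(n)}$ as in (\ref{eq:sigma2}). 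Using $(Z_{p}Z_{p}^{T})^{-1}Z_{p}=(Z_{p}^{\dagger})^{T}$, the posterior mean un-vectorizes to $(\Sigma_{2}^{(n)})^{-1}\Gamma_{f}^{(n)T}\Psi_{e}^{(n)}(Y_{f}-H_{f}^{(n)}U_{f})Z_{p}^{\dagger}$, and the square root $S=(Z_{p}^{\dagger})^{T}\otimes(\Sigma_{2}^{(n)})^{-1/2}$ — for which $SS^{T}=(Z_{p}Z_{p}^{T})^{-1}\otimes(\Sigma_{2}^{(n)})^{-1}$ by the same identity — produces the stochastic term $(\Sigma_{2}^{(n)})^{-1/2}\Xi_{L}^{(n)}Z_{p}^{\dagger}$, which is (\ref{eq:reg2}).

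The Kronecker algebra itself is routine; the steps that need care are (i) keeping left/right multiplication and the non-symmetric square roots consistent when passing between $\mathrm{vec}$-form and matrix-form, and (ii) the pseudo-inverse $Z_{p}^{\dagger}$ in the $L_{p}$ update: the map $\Xi_{L}\mapsto\Lambda_{L}^{-1/2}\Xi_{L}Z_{p}^{\dagger}$ is not injective, but because $Z_{p}$ has full row rank its pushforward is still a nondegenerate Gaussian on the space of $L_{p}$, and the identity $(Z_{p}Z_{p}^{T})^{-1}Z_{p}=(Z_{p}^{\dagger})^{T}$ is exactly what makes the mean and covariance collapse into the stated closed forms. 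I would also flag, as already noted after (\ref{eq:LS}), that the columns of $E_{f}$ are only approximately independent, so the statement should be read as giving exact posterior draws under the adopted (approximating) likelihood in which $\mathrm{vec}(G_{f}E_{f})\sim\mathcal{N}(0,I_{N}\otimes G_{f}G_{f}^{T})$.
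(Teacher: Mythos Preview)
Your proposal is correct. Note, however, that the paper does not actually prove Proposition~\ref{prop:posterior}: it states the result and offers only the one-line remark that one is ``performing independent regressions row by row in (\ref{eq:regressionGammaH}) and independent regressions column by column in (\ref{eq:reg2}), in addition to summing the corresponding simulated noise terms.'' Your vectorization/Kronecker argument is the rigorous version of exactly that intuition: the key observation in both cases is that the prior precision and the data precision share a common Kronecker factor --- $(\bar{G}_f\bar{G}_f^T)^{-1}$ on the row side for the $(\Gamma_f,H_f)$ update, and $Z_pZ_p^T$ on the column side for the $L_p$ update --- so the posterior precision factors as a Kronecker product and the matrix regression decouples into independent scalar regressions along rows (resp.\ columns). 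What your write-up adds beyond the paper is (i) the explicit identification of $\gamma^{(n)}$ via $G_fG_f^T=(\gamma^{(n)})^{-1}\bar{G}_f\bar{G}_f^T$, (ii) the verification that the non-symmetric square roots $(\Sigma_1^{(n)})^{-1/2}\otimes\bar{G}_f^{(n-1)}$ and $(Z_p^{\dagger})^T\otimes(\Sigma_2^{(n)})^{-1/2}$ indeed reproduce the posterior covariance, and (iii) the full-row-rank assumption on $Z_p$ that makes $(Z_p^{\dagger})^TZ_p^{\dagger}=(Z_pZ_p^T)^{-1}$ and $(Z_pZ_p^T)^{-1}Z_p=(Z_p^{\dagger})^T$ valid. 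Your caveat about the approximating likelihood $\mathrm{vec}(G_fE_f)\sim\mathcal{N}(0,I_N\otimes G_fG_f^T)$ is also well placed and consistent with the paper's own acknowledgment after~(\ref{eq:LS}).
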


Intuitively, we are performing independent regressions row by row in (\ref{eq:regressionGammaH}) and independent regressions column by column in (\ref{eq:reg2}), in addition to summing the corresponding simulated noise terms. 

\begin{theorem}
\label{thm:postG}
Let $\mathcal{T}_{i\times i}$ be the matrix that maps the first column of a lower triangular Toeplitz matrix of dimension $i\times i$ to its vectorization, i.e., $\vectorization(\mathcal{G}) = \mathcal{T}_{i\times i} (\mathcal{G}_{[:,1]})$ for any lower triangular Toeplitz matrix $\mathcal{G}$ of row-length $i$. Let $\mathcal{H}_{i\times j}$ be the matrix that maps an $i+j-1$ vector into an $i\times j$ Hankel matrix such as in (\ref{eq:hankely}). Let $\mathcal{W}_{d}(\mathsf{V},\mathsf{n})$ denote the $d$-dimensional Wishart distribution of $\mathsf{n}$ degrees of freedom and scale matrix $\mathsf{V}$. Then, conditioned on $\Gamma_f, L_p$ and $H_f$ known, a sample from the posterior distribution of $G_f$ is obtained from the equations:
\begin{align}
&\mathcal{E}^{(n)} = Y_f - \Gamma_f^{(n)}L_p^{(n)}Z_p -H_f^{(n)}U_f \label{eq:posti}\\
&\Omega^{(n)} = (\mathcal{T}_{i\times i}^T\otimes I_{n_o})(I_i\otimes\mathcal{E}^{(n)})\bar{\mathcal{H}}_{j\times i}(I_{n_o}\otimes(\mathcal{E}^{(n)})^T )(\mathcal{T}_{i\times i}\otimes I_{n_o}) \label{eq:Omega}\\
&\Theta^{(n)} \sim \mathcal{W}_{n_o}(I_{n_o},i+j-in_o) \\
&\mathsf{N}_1^{(n)} =(\Theta^{(n)})^{1/2} \\
&\mathsf{N}_\ell^{(n)} \sim N(0,I_{n_o}) , ~\text{for} ~\ell=2, \ldots, i\\
&(G_f^{(n)})^{-1}_{[1:in_o,1:n_o]} = (\Omega_U^{(n)})^{-1} [\mathsf{N}_1 ~ \mathsf{N}_2 \cdots \mathsf{N}_i]^T \label{eq:post1}
\enspace,
\end{align}
where $\Omega_U^{(n)}$ denotes the upper triangular part of the Cholesky decomposition of $\Omega^{(n)}$ and $\bar{\mathcal{H}}_{j\times i}=\mathcal{H}_{j\times i}(\mathcal{H}_{j\times i}^T\mathcal{H}_{j\times i})^{-2}\mathcal{H}_{j\times i}^T$.

Furthermore, if one adopts the approximation that the noise vector $e_f=\vectorization(E_f)$ has independently distributed components, then, the conditional  posterior distribution of $G_f$ is obtained from:
\begin{align}
&\mathcal{E}^{(n)} = Y_f - \Gamma_f^{(n)}L_p^{(n)}Z_p -H_f^{(n)}U_f \label{eq:posti2}\\
&\Omega^{(n)} = (\mathcal{T}_{i\times i}^T\otimes I_{n_o})(I_i\otimes\mathcal{E}^{(n)}(\mathcal{E}^{(n)})^T )(\mathcal{T}_{i\times i}\otimes I_{n_o}) \\
&\Theta^{(n)} \sim \mathcal{W}_{n_o}(I_{n_o},ij-in_o+1) \\
&\mathsf{N}_1^{(n)} =(\Theta^{(n)})^{1/2} \\
&\mathsf{N}_\ell^{(n)} \sim N(0,I_{n_o}) , ~\text{for} ~\ell=2, \ldots, i\\
&(G_f^{(n)})^{-1}_{[1:in_o,1:n_o]} = (\Omega_U^{(n)})^{-1} [\mathsf{N}_1 ~ \mathsf{N}_2 \cdots \mathsf{N}_i]^T \label{eq:posti22}
\enspace.
\end{align}

\end{theorem}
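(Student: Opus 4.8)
### Proof proposal

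The plan is to derive the conditional posterior of $G_f$ from the Gaussian likelihood implied by (\ref{eq:mod0}) together with the improper prior (\ref{eq:Gprior}), and then push the resulting density through the change of variables that maps the free parameters of a block lower triangular Toeplitz matrix to the objects $\Theta^{(n)}$ and $\mathsf{N}_\ell^{(n)}$. First I would fix $\Gamma_f$, $L_p$ and $H_f$, form the residual $\mathcal{E} = Y_f - \Gamma_f L_p Z_p - H_f U_f$, and observe that under the model $\mathcal{E}$ plays the role of $G_f E_f$; equivalently $G_f^{-1}\mathcal{E}$ has i.i.d.\ standard normal entries (or, in the exact version, columns that are white in time but correlated through the Hankel/Toeplitz overlap). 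The likelihood is therefore proportional to $|G_f|^{-j}\exp\!\left(-\tfrac12\,\mathrm{tr}\,(G_f^{-T}G_f^{-1}\mathcal{E}\mathcal{E}^T)\right)$ in the approximate case, and combining with the prior $|\mathsf{G}_{11}|^{-i n_o}$ gives the posterior kernel. The key reparametrization is to work with $M := G_f^{-1}$, which is again block lower triangular Toeplitz, so $M$ is determined by its first block column $M_{[1:in_o,1:n_o]}$; I would then use the operator $\mathcal{T}_{i\times i}$ to write $\vectorization(M) = (\mathcal{T}_{i\times i}\otimes I_{n_o})\,\vectorization(M_{[1:in_o,1:n_o]})$ and rewrite $\mathrm{tr}(M^T M\,\mathcal{E}\mathcal{E}^T) = \vectorization(M)^T (I \otimes \mathcal{E}\mathcal{E}^T)\,\vectorization(M)$; substituting the $\mathcal{T}_{i\times i}$ identity collapses this into a quadratic form with kernel $\Omega^{(n)}$ exactly as in (\ref{eq:Omega}) of the approximate case. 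For the exact case, the columns of $E_f$ are not independent because neighboring columns of the Hankel matrix share entries, so the noise covariance is $\mathcal{H}_{j\times i}\mathcal{H}_{j\times i}^T$ rather than identity; inverting this on the appropriate subspace is what produces the factor $\bar{\mathcal{H}}_{j\times i}=\mathcal{H}_{j\times i}(\mathcal{H}_{j\times i}^T\mathcal{H}_{j\times i})^{-2}\mathcal{H}_{j\times i}^T$ sandwiched inside $\Omega^{(n)}$.

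Next I would identify the resulting distribution on $M_{[1:in_o,1:n_o]}$. Writing this block column as $[\mathsf{N}_1^T\ \mathsf{N}_2^T\ \cdots\ \mathsf{N}_i^T]^T$ with each $\mathsf{N}_\ell \in \mathbb{R}^{n_o\times n_o}$ (the first block $\mathsf{N}_1$ being the diagonal block $\mathsf{G}_{11}^{-1}$, up to triangularity conventions), the quadratic term contributes $\exp(-\tfrac12\,\mathrm{tr}(\mathsf{N}^T \Omega\, \mathsf{N}))$ where $\mathsf{N}$ stacks the blocks, while the Jacobian of $G_f\mapsto M$ and the prior $|\mathsf{G}_{11}|^{-in_o}$ together with the likelihood power $|G_f|^{-j}=|\mathsf{G}_{11}|^{-ij}$ combine into a power of $|\mathsf{N}_1|=|\mathsf{G}_{11}^{-1}|$. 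Tracking the exponents carefully — here $|G_f| = |\mathsf{G}_{11}|^i$ because of the block Toeplitz structure — one finds that $\mathsf{N}_1$ carries a factor $|\mathsf{N}_1|^{\,\nu - n_o - 1}$ for the appropriate $\nu$ ($\nu = i+j-in_o$ in the exact case, $\nu = ij - in_o + 1$ in the approximate case), which is precisely the Wishart$(I_{n_o},\nu)$ shape for $\mathsf{N}_1\mathsf{N}_1^T = \Theta$, while the off-diagonal blocks $\mathsf{N}_2,\dots,\mathsf{N}_i$ appear only through the Gaussian quadratic form and hence are conditionally $N(0,I_{n_o})$ once we whiten by $\Omega$. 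Finally, performing the Cholesky factorization $\Omega = \Omega_U^T \Omega_U$ and setting $\mathsf{N} = \Omega_U^{-1}[\mathsf{N}_1\ \cdots\ \mathsf{N}_i]^T$ realizes a draw from this distribution, giving (\ref{eq:post1}) and (\ref{eq:posti22}).

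The main obstacle I anticipate is bookkeeping of the degrees of freedom and of the Jacobian: one must correctly account for (i) the power of $|\mathsf{G}_{11}|$ coming from $|G_f|^{-j}$ versus from the improper prior, (ii) the Jacobian of the map $G_f \leftrightarrow G_f^{-1}$ restricted to block lower triangular Toeplitz matrices (which is again a power of $|\mathsf{G}_{11}|$), and (iii) the reduction in effective sample size caused by the Hankel overlap in the exact case, which is exactly what shifts the Wishart degrees of freedom from the naive $ij$ to $i+j-1$ observations' worth. A secondary subtlety is justifying that only the $(1,1)$ block $\mathsf{N}_1$ inherits the Wishart (volume) factor while the strictly-lower blocks remain Gaussian; this follows from the fact that the Toeplitz structure makes $|G_f|$ depend on $\mathsf{G}_{11}$ alone, so the non-Gaussian part of the density is a function of $\mathsf{N}_1$ only. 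Once these exponent computations are pinned down, the remainder is the standard identification of a Gaussian-times-determinant-power kernel with a product of a Wishart and independent normals, conditionally whitened by $\Omega_U$.
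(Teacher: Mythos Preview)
Your proposal is correct and follows essentially the same route as the paper: form the residual $\mathcal{E}=G_fE_f$, parametrize through $M=G_f^{-1}$, use $\mathcal{T}_{i\times i}\otimes I_{n_o}$ to collapse the quadratic form to the first block column and obtain $\Omega$, then whiten by the Cholesky factor and read off the Wishart on the leading block and independent normals on the rest. The one place where the paper is slightly slicker than your outline is your ``main obstacle'' (ii): rather than computing the Jacobian of $G_f\mapsto G_f^{-1}$ on block lower triangular Toeplitz matrices, the paper simply writes the prior directly as $d\mu(M)\propto |\mathsf{M}_{11}|^{-in_o}\,dM$, which is legitimate because Theorem~\ref{thm:equivariance} shows the prior is both left- and right-invariant (hence the group is unimodular and the Haar measure is inversion-invariant); for the exact case the determinant exponent $i+j-1$ comes from the pseudo-determinant of the rank-$(i{+}j{-}1)$ covariance $(I_j\otimes G_f)(\mathcal{H}_{i\times j}\mathcal{H}_{i\times j}^T\otimes I_{n_o})(I_j\otimes G_f^T)$, exactly the ``effective sample size'' shift you anticipate.
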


\begin{proof}

The following preliminary result is needed regarding the block vectorization of block matrices. 

Let $P, Q$ and $R$ be block matrices such that $P=QR$ and let $P$ and $R$ be organized in columns 
of block matrices such that $P=\left[\mathsf{P}_{:,1}~ \mathsf{P}_{:,2} \cdots \mathsf{P}_{:,K}\right]$ and $R=\left[\mathsf{R}_{:,1}~ \mathsf{R}_{:,2} \cdots \mathsf{R}_{:,K}\right]$. By stacking these columns we define the block vectorization $\text{vec}_b(P)$ and $\text{vec}_b(R)$. Now, since $\mathsf{P}_{:,k}=Q\mathsf{Q}_{:,k}$, we have that
\begin{equation}
\text{vec}_b(P)=(I_K\otimes Q)\text{vec}_b(R)
\enspace.
\end{equation}
Given this preliminary result, we now proceed to consider the residues
\begin{equation}
\mathcal{E} = Y_f-H_{fp}Z_p-H_fU_f = G_f E_f
\enspace.
\end{equation}
Taking the vectorization operation and maki	ng use of the noise vector $\bar{e}_f$ on the subspace of dimension $n_o(i+j-1)$, we have that
\begin{equation}
\vectorization(\mathcal{E}) = (I_j\otimes G_f) (\mathcal{H}_{i\times j}\otimes I_{n_o})\bar{e}_f
\enspace.
\end{equation}
Therefore, the residues covariance is given by
\begin{equation}
\Sigma_{\mathcal{E}} = (I_j\otimes G_f) (\mathcal{H}_{i\times j} \mathcal{H}_{i\times j}^T \otimes I_{n_o})(I_j\otimes G_f^T)
\enspace.
\end{equation}
Since $\text{rank}(\mathcal{H}_{i\times j})=(i+j-1)$, $\Sigma_{\mathcal{E}}$ is rank deficient and we shall make use of its pseudo-determinant $|\cdot|_+$ (product of non-zero singular values) in obtaining its pdf:

\begin{align}
|\Sigma_{\mathcal{E}}|_{+} &= |(I_j\otimes G_f) (\mathcal{H}_{i\times j} \mathcal{H}_{i\times j}^T \otimes I_{n_o}) (I_j\otimes G_f^T)|_{+}  \\
&= |(I_j\otimes G_f)\cdot (\text{chol}(\mathcal{H}_{i\times j} \mathcal{H}_{i\times j}^T)\otimes I_{n_o})|_{+}^2
\propto \left|\mathsf{G}_{ii}\right|^{2(i+j-1)}
\enspace,
\end{align}
where we used the fact that $G_f$ is block lower triangular Toeplitz and the fact that the lower triangular matrix $\text{chol}(\mathcal{H}_{i\times j}\mathcal{H}_{i\times j}^T)$ must have exactly $(i+j-1)$ nonzero entries on its diagonal.

In order to make the block vectorization of $G_f$ explicit in the likelihood function, we note that
\begin{align}
E_f = G_f^{-1}\mathcal{E} \Rightarrow \vectorization_b(E_f^T) = (I_i\otimes \mathcal{E}^T )\vectorization_b(G_f^{-T})
\enspace.
\end{align}
Now, let $\tilde{e}_f$ be the noise vector organized as a $i+j-1\times n_o$ matrix. Then, $\text{vec}_b(E_f^T)=\mathcal{H}_{j\times i}\tilde{e}_f$, which, combined with the previous equation, gives 
\begin{equation}
\tilde{e}_f = \mathcal{H}_{j\times i}^\dagger(I_i\otimes \mathcal{E}^T ) \vectorization_b(G_f^{-T}) = \mathcal{H}_{j\times i}^\dagger(I_i\otimes \mathcal{E}^T ) (\mathcal{T}_{i\times i}\otimes I_{n_o}) \mathsf{M}_{\cdot,1}
\enspace,
\end{equation}
where $\mathsf{M}_{\cdot,1}$ denotes the first block column of $G_f^{-T}$.

Therefore,
{\small
\begin{equation}
p(\mathcal{E}|G_f) \propto \frac{1}{\left|\mathsf{G}_{ii}\right|^{i+j-1}}
\exp\left(-\frac{1}{2}\tr \mathsf{M}_{\cdot,1}^T(\mathcal{T}_{i\times i}^T\otimes I_{n_o})(I_i\otimes \mathcal{E}) \cdot
(\mathcal{H}_{j\times i}^\dagger)^T\mathcal{H}_{j\times i}^\dagger(I_i\otimes \mathcal{E}^T)(\mathcal{T}_{i\times i}\otimes I_{n_o}) \mathsf{M}_{\cdot,1}\right)
\label{eq:likelihood}
\end{equation}
}
and the posterior would be proportional to
{
\begin{align}
p(\mathcal{E}|M)d\mu(M) &\propto \frac{|\mathsf{M}_{11}|^{i+j-1}}{|\mathsf{M}_{11}|^{in_o}} \exp\left(-\frac{1}{2}\tr \mathsf{M}_{\cdot,1}^T\Omega\mathsf{M}_{\cdot,1}\right) 
\\
&= |\mathsf{M}_{11}|^{j+i-in_o-1}\exp\left(-\frac{1}{2}\tr \mathsf{M}_{\cdot,1}^T\Omega\mathsf{M}_{\cdot,1}\right)
\enspace.
\end{align}
}
Defining the change of variables ${N}=\Omega_U\mathsf{M}_{\cdot,1}$, the triangular structure of $\Omega_U$ gives that $|\mathsf{M}_{11}|\propto|\mathsf{N}_1|$ and we have
\begin{equation}
p({N}|\scr{E}) \propto |\mathsf{N}_1|^{j+i-in_o-1}\exp\left(-\frac{1}{2}\sum_{\ell=1}^i\tr \mathsf{N}_\ell^T\mathsf{N}_\ell\right)
\enspace. 
\end{equation}
Therefore, for $\ell=2,\ldots,i$, $\mathsf{N}_\ell$ are matrices whose elements are independent and normally distributed with unit variance. To determine the distribution of $\mathsf{N}_1$, we make the change of variables $\Theta=\mathsf{N}_1^T\mathsf{N}_1$. This transformation gives a Jacobian determinant of $2|\mathsf{N}_1|^{n_o}$, which leads to the distribution
\begin{equation}
p({\Theta}|\scr{E}) \propto \frac{|\mathsf{N}_i|^{j+i-in_o-1}}{|\mathsf{N}_i|^{n_o}}\exp\left(-\frac{1}{2}\tr \Theta\right) = {|\Theta|^{\frac{j+i-in_o-1-n_o}{2}}}\exp\left(-\frac{1}{2}\tr \Theta\right) = \scr{W}_{n_o}(I_{n_o},j+i-in_o)
\enspace. 
\end{equation}

This is precisely the posterior given by equations (\ref{eq:posti})-(\ref{eq:post1}).

If, on the other hand, we assumed that $e_f\sim N(0,I_{in_oj})$, this would equivalent to replacing $\mathcal{H}_{i\times j}$ above by $I_{in_oj}$, which would lead to
\begin{equation}
|\Sigma_{\mathcal{E}}|\propto |\mathsf{G}_{ii]}|^{2ij}
\end{equation}
and the same steps as above would lead to $\Omega = (\mathcal{T}_{i\times i}^T\otimes I_{n_o})(I_i\otimes\scr{E}\scr{E}^T)(\mathcal{T}_{i\times i}\otimes I_{n_o})$ and $p(\Theta|\scr{E}) = \scr{W}_{n_o}(I_{n_o},ij-in_o+1)$. The posterior defined by equations (\ref{eq:posti2})-(\ref{eq:posti22}) would then follow. 
\end{proof}

In other words, we first compute the residues and obtain a sample of the first block column of $G_f^{-1}$, then we construct the full block lower triangular Toeplitz matrix $G_f^{-1}$ from this column and compute its inverse. Given that the matrix $\mathcal{H}_{j\times i}$ is very large and sparse, (\ref{eq:Omega}) may be somewhat tricky to compute. A safer alternative may come from ignoring the Hankel structure of $E_f$ and assuming that its elements are mutually independent, which gives the posterior updates in (\ref{eq:posti2})-(\ref{eq:posti22}). 

So far we have not taken advantage of the lower triangular Toeplitz structure of $H_f$ (as given in (\ref{eq:hf})). In order to rigorously do so, one would need to compute the posterior distribution for the  vectorization of $H_f$. However, we found that this does not tend to perform well numerically. Instead, it proved beneficial to use the following approximate method. Apply Proposition \ref{prop:posterior} to compute only the last block row of $H_f$, then reconstruct the full block lower triangular Toeplitz matrix from its last block row. The rationale here is that the regressions are row-wise independent for our assumed prior. However, with the Toeplitz structure, the elements of the last row become dependent on the information from the remaining rows and our posterior computation is only approximate. To compensate for that, we add an extra Gibbs step to recompute $\Gamma_f$ now with $H_f$ fixed:

\begin{equation}
\Gamma_f^{(n)} 
 = (Y_f-H_f^{(n)}U_f)
( X_p^{(n-1)})^T
\left(
\Sigma_\Gamma^{(n)}
\right)^{-1}
\gamma^{(n)}
\\+
\bar{G}_f^{(n-1)}\Xi_{\Gamma}^{(n)}\left(
\Sigma_\Gamma^{(n)}\right)^{-1/2}
\label{eq:reg1}
\end{equation}
\begin{equation}
\label{eq:sigmagamma}
\Sigma_\Gamma^{(n)} = 
 \Lambda_\Gamma 
+
 X_p^{(n-1)} 
(X_p^{(n-1)})^T
\gamma^{(n)}
\enspace. 
\end{equation}

Combining the conditional posterior updates given by Proposition \ref{prop:posterior}, Theorem \ref{thm:postG} and the above equations, we are able to apply the Gibbs procedure and estimate the full posterior distribution of our model. Finally, the estimate for ${H}_{fp}$ is obtained by averaging over the trajectory of the Markov chain:
\begin{equation}
\hat{H}_{fp} = \frac{1}{N_F-N_o} \sum_{n=N_o}^{N_F} \Gamma_f^{(n)}L_p^{(n)}
\enspace,
\end{equation} 
where $N_o$ is some burn-in period intended to remove the effect of transients. In order to reduce variance and improve convergence, one may prefer to average over the expected values of (\ref{eq:reg1}) and (\ref{eq:reg2}) in every step (obtained by setting the respective $\Xi_{(\cdot)}$ matrices to zero):
\begin{equation}
\label{eq:Hestimate}
\hat{H}_{fp} = \frac{1}{2(N_F-N_o)} \sum_{n=N_o}^{N_F} \E[\Gamma_f^{(n)}]L_p^{(n-1)}+\Gamma_f^{(n)}\E[L_p^{(n)}]
\enspace.
\end{equation}

\section{Experiments on the DAISY datasets}

In this section we report the results of applying the methods proposed in \cite{mesquita2024robust} to the experimental data in the DAISY datasets \cite{de1997daisy}. In doing so, the extension provided in this report is needed in order to deal with the MIMO systems.

In the present case we omit the comparison with kernel methods as the available options in the MIMO case are still incipient and we found no candidate for a fair comparison. 

As the comparison metric, we follow [18] in using the normalized one-step ahead prediction error on a validation data set of $N_{val}$ samples:
\[
\frac{1}{n_o}\sum_{m=1}^{n_o}\frac{\sum_{k=0}^{N_{val}}(\hat{y}_m[k|k-1]-y_m[k])^2}{\sum_{k=0}^{N_{val}} y_m[k]^2}
\enspace,
\]
where $y[k]$ has been detrended.

As there are many methods to estimate the system matrices from $\hat{H}_{fp}$, we tested multiple methods for a given $\hat{H}_{fp}$ and selected the one that minimized the prediction error on the validation set. The tested procedures were the robust identification algorithm from [18] and the two methods presented in  [17].

Given that the studied methods are suited to operate with noisy data, we added different levels of extra noise to the real data. Having estimated the covariance matrix $\Sigma_y$ of the output, we generated white Gaussian noise $w[k]\sim N(0,\alpha^2\Sigma_y)$ and colored noise with the equation $v[k]=0.5v[k-1]+\sqrt{3}/2 w[k]$, where the noise level parameter $\alpha$ is fixed. Both estimation and validation data were then contaminated with $v[k]$.

The row-length of Hankel matrices was set to $i= \min\{15,\lfloor N/(10(n_o+n_i))\rfloor\}$. The weight $W_2$ was set to $Z_p$ as in the paper and the weight matrix $W_1$ was set to
\[
W_1 = I_i \otimes \Lambda_{y}
\enspace,
\]
where $\Lambda_{y}$ is a diagonal matrix constructed by taking the inverse of the square root of the diagonal elements of $\Sigma_y$.

The adopted sample and validations sizes are informed in Table \ref{tab:samplesizes}.

\begin{table}
\centering
\begin{tabular}{|l|c|c|}
\hline 
Dataset & $N$ & $N_{val}$ \\
\hline
Heating system & 280 & 521 \\
Powerplant & 120 & 80 \\
Winding & 875 & 1625 \\
Wall thermal resistance & 1260 & 420\\
Reactor & 2625 & 4875 \\
Glass furnace & 436 & 811\\
Evaporator & 2206 & 4099\\
Industrial dryer & 303 & 564 \\
Ball beam$^*$ & 1000 & 1000 \\
Hair dryer & 350 & 650 \\
CD player arm & 716 & 1332\\
Flutter$^*$ & 1024 & 1024\\
Robot arm & 768 & 256\\
Flexible structure & 2983 & 5540\\
\hline
\end{tabular}
\caption{Estimation sample size $N$ and validation sample size $N_{val}$ adopted in each dataset of the DAISY database. ($^*$): estimation and validation data are the same.}
\label{tab:samplesizes}
\end{table}

The variation on the average normalized prediction error as a function of the noise level is depicted in Figure \ref{fig:pevsnr}. The results are qualitatively similar to those in \cite{mesquita2024robust} with up to $25$\% reduction in the average prediction error. As the best performance is achieved by OptShrink, we infer that most of the noise is output noise. 

  \begin{figure}[thpb]
  \begin{center}
  \resizebox{150mm}{!}{\includegraphics{./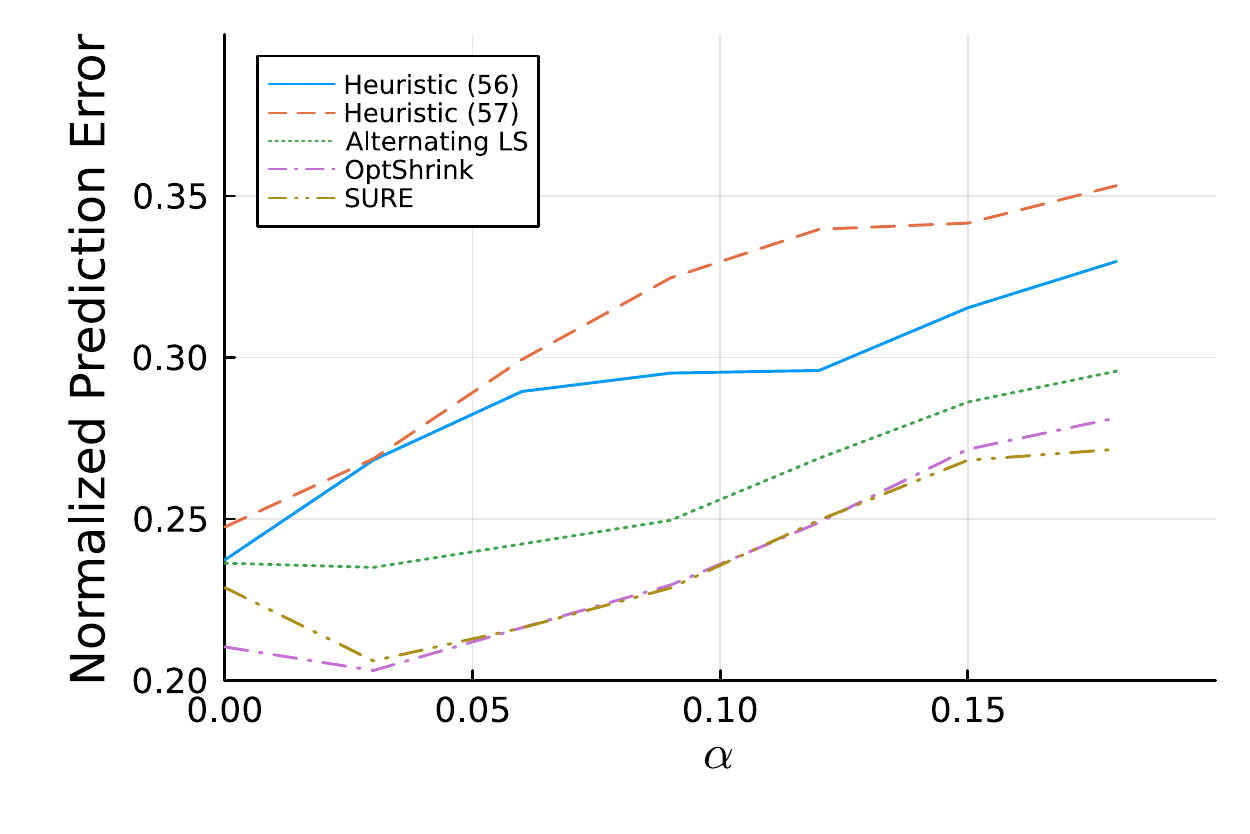}}  
  \end{center}  
  \caption{Average normalized prediction error as a function of the added noise level $\alpha$ for all $14$ datasets in the DAISY database.}
  \label{fig:pevsnr}
  \end{figure}

In Figure \ref{fig:boxplot} we see the overall statistics of the experiments on the DAISY datasets. We observe that OptShrink and SURE give the least risk in the median, third quartile and worst case. All the proposed methods improved performance in the third quartile and in worst case.

  \begin{figure}[thpb]
  \begin{center}
  \resizebox{150mm}{!}{\includegraphics{./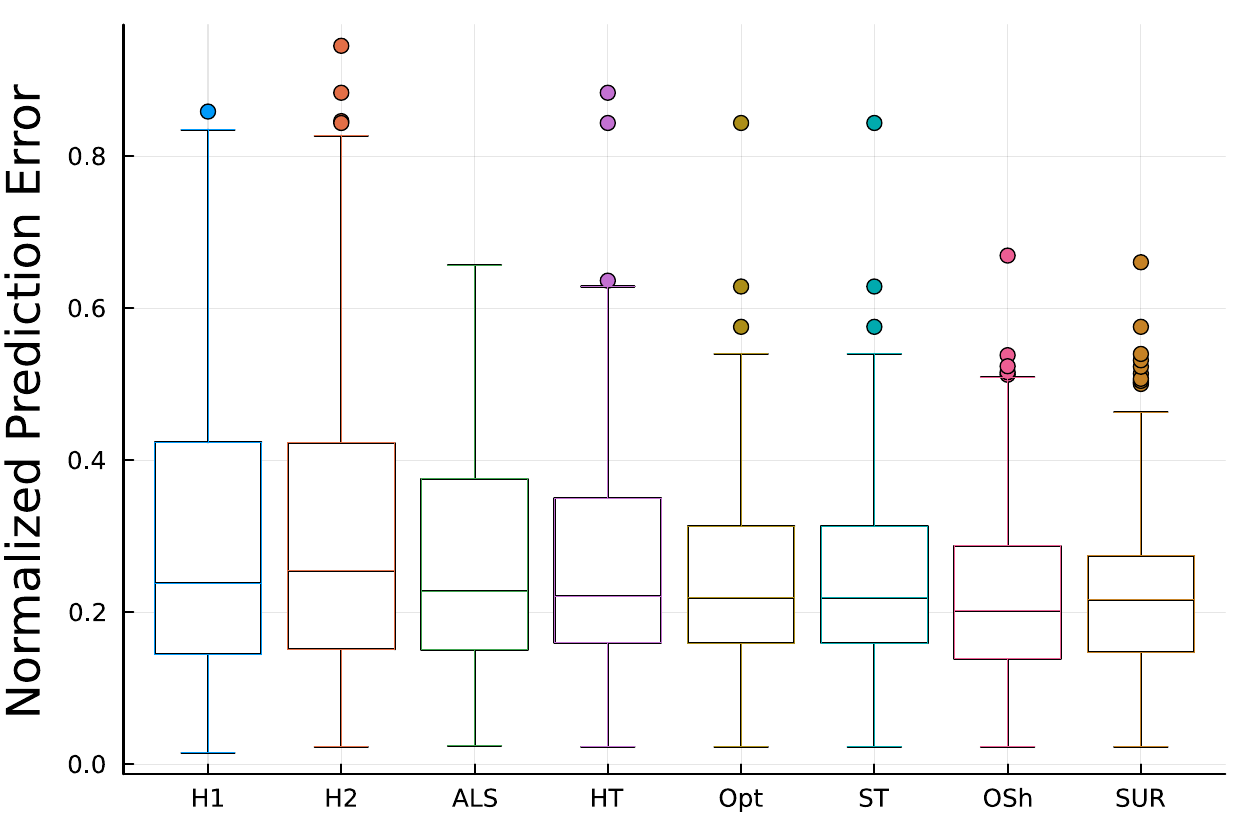}}  
  \end{center}  
  \caption{Boxplot of the normalized prediction error obtained by different estimation methods applied to the $14$ datasets in the DAISY database with different levels of additive noise.}
  \label{fig:boxplot}
  \end{figure}

\bibliographystyle{unsrt}
\bibliography{RobustBayes}

\end{document}